\title{A Root-Free Splitting-Lemma for\\ Systems of Linear Differential
Equations}
\author{Eckhard Pfl\"{u}gel\\
Faculty of Computing, Information Systems and Mathematics\\
Kingston University\\
Penrhyn Road\\
Kingston upon Thames\\
Surrey KT1 2EE\\
United Kingdom\\
E.Pfluegel@kingston.ac.uk}
\date{}
\newtheorem{theorem}{Theorem}[section]
\newtheorem{proposition}{Proposition}[section]
\newtheorem{lemma}{Lemma}[section]
\newtheorem{definition}{Definition}[section]
\renewenvironment{proof}
        {\noindent {\bf Proof}\hspace*{0.1cm}}{\hspace{\fill}$\Box$\bigskip\\}
\newcounter{remcounter}
\newenvironment{remark}
        {\noindent {\bf Remark\stepcounter{remcounter} \arabic{section}.\arabic{remcounter}.}\hspace*{0.05cm}}{}
\newenvironment{mat}
    {\left( \begin{array}}{\end{array} \right) }
\newcommand{\C}{{\mathbb C}}
\newcommand{\Z}{{\mathbb Z}}
\newcommand{\N}{{\mathbb N}}
\newcommand{\GLn}[1]{{\rm GL}(n, {#1})}
\newcommand{\diag}{{\rm diag}}
\newcommand{\wspec}{\omega\mbox{-}{\rm spec}}
\newcommand{\wlspec}[1]{\omega^{#1}\mbox{-}{\rm spec}}
\newcommand{\wpspec}{\omega^p\mbox{-}{\rm spec}}
\newcommand{\spec}{{\rm spec}}
\begin{document}

\bibliographystyle{plain}
\maketitle

\begin{abstract}
We consider the formal reduction of a system of linear
differential equations and show that, if the system can be
block-diagonalised through transformation with a ramified
Shearing-transformation and following application of the Splitting
Lemma \cite{Was67}, and if the spectra of the leading block
matrices of the ramified system satisfy a symmetry condition, this
block-diagonalisation can also be achieved through an unramified
transformation. Combined with classical results by Turritin
\cite{Tur55} and Wasow \cite{Was67} as well as work by Balser
\cite{Bal2000}, this yields a constructive and simple proof of the
existence of an unramified block-diagonal form from which formal
invariants such as the Newton polygon can be read directly. Our
result is particularly useful for designing efficient algorithms
for the formal reduction of the system.
\end{abstract}

\begin{center}
{\small {\bf Mathematics Subject Classification:} 34M25,
34M35}\\[0.5cm]
{\small {\bf Keywords:} Formal Reduction of Systems of Linear
Differential Equations,\\
Formal Solutions, Newton Polygons}
\end{center}

\section{Introduction}
When studying the formal reduction of a system of linear differential equations
\begin{equation}
    \label{system}
                x\frac{dy}{dx} = A(x)y
\end{equation}
where $y$ is a vector with $n \ge 2$ components and $A$ a square
formal meromorphic power series matrix  of dimension $n$ of the form
\[
          A(x) = x^{-r}\sum_{j=0}^\infty A_{j} x^j\quad(A_0 \neq 0)
\]
with pole order $r>0 $, the structure of the leading matrix $A_0 $
allows to reduce the problem to several problems of smaller size
whenever $A_0 $ has several eigenvalues. The well-known {\em
Splitting Lemma} \cite{Was67} states that if $A_0 $ is
block-diagonal
\[
    A_0 =
    \left(%
\begin{array}{cc}
  A^{11}_0 & 0 \\
  0 & A^{22}_0\\
\end{array}%
\right)
\]
with the additional condition that $A^{11}_0$ and $A^{22}_0$ have
no common eigenvalue, there exists a formal transformation matrix
\begin{equation}
    \label{Tx}
    T(x)= \sum_{j=0}^\infty T_{j} x^j\quad(T_0 = I)
\end{equation}
such that the change of variable $ y = Tz$ transforms the system
(\ref{system}) into a new system
\begin{equation}
    \label{newsystem}
                x\frac{dz}{dx} = B(x)z
\end{equation}
where
\[
B = \left(%
\begin{array}{cc}
  B^{11} & 0 \\
  0 & B^{22} \\
\end{array}%
\right)
\]
is of same pole order $r $ and block-diagonal with the same block
partition as in $A_0 $. The matrix $B $ is computed by
\begin{equation}
    \label{transformation}
           B  = T[A]:=T^{-1}AT - xT^{-1}\frac{dT}{dx}.
\end{equation}
Using the Splitting Lemma it is hence sufficient to study the case
where the leading matrix $A_0 $ in (\ref{system}) has only one
eigenvalue. Using an {\em exponential shift} of the form
$y=\exp(\lambda/x^r)z$ where $\lambda $ is the unique eigenvalue
of $A_0 $ one can (and we will throughout this paper) assume that
$A_0 $ is nilpotent.\\

Several methods for finding transformation matrices which again
lead to non-nilpotent leading matrices have been suggested
\cite{Bal2000, Bar97a, Che90, Pfl00, Tur55, Was67}. It can be
shown that this, combined with the Splitting Lemma, gives rise to
a recursive procedure which decomposes the initial system into new
systems for which one has either $n = 1 $ or $r = 0$. The
structure of the matrix $W $ of a formal fundamental matrix
solution of the system
\begin{equation}\label{FFMS}
    Y(x)= F(x)x^{\Lambda} e^{W(x)}
\end{equation}
can be determined uniquely through this method. Here $F $ is an
invertible formal meromorphic matrix power series in a fractional
power of $x $, $\Lambda$ is a constant complex matrix commuting
with $W$ and $W $ is a diagonal matrix containing polynomials in
the same fractional power of $x$ without constant
terms.\\

For the purposes of this paper, it is useful to distinguish
between the following two types of transformation matrices:
\begin{enumerate}
    \item Matrices containing formal meromorphic power series in the variable $x $, whose determinant
          is not the zero series.
          We will refer to this type of transformations as {\em root-free transformations}.
          Two systems linked as in (\ref{transformation}) by such a root-free
          transformation shall be called {\em meromorphically equivalent} or short {\em equivalent}.
    \item Matrices having coefficients which are formal meromorphic power series in a fractional power of $x $,
          whose determinant is not the zero series. We will call these transformations {\em ramified
          transformations}. If $T $ is a ramified transformation, the smallest integer $q $ such
          that $T(x^q) $ is root-free is called the {\em ramification index} of $T $.
          We shall also say that $T $ is a {\em $q $-meromorphic transformation} and takes a system
          into a {\em $q$-meromorphically equivalent} system.
\end{enumerate}
In this paper, we are interested in the situation where one cannot
find transformations of the first type in order to obtain a system
with a non-nilpotent leading matrix. In other words, the
introducing of a ramification is necessary. This can be stated in
terms of formal solutions by saying that the dominant (negative)
power of $x $ in the matrix $W $, or alternatively the biggest
slope of the
Newton polygon of the system, is a rational number \cite{Bar97a, HilWaz86}.\\

In this case, the methods in \cite{Bal2000, Che90, Tur55, Was67}
apply a series of root-free, ramified and {\em
Shearing-transformations} (in \cite{Bar97a, Pfl00} a different
strategy is employed). A Shearing-transformation is a
transformation of the form
\[
    S(x) =  \left(%
\begin{array}{cccc}
  x^{p_1/q} &  &  &  \\
            & x^{p_2/q} &  &  \\
   &  & \ddots &  \\
   &  &  & x^{p_n/q} \\
\end{array}%
\right)
\]
where $p_j\in\Z $ and $q\in\N $. In \cite{Bal2000} it is shown
that it is always possible to achieve this by using exponential
shifts and a transformation of the form
\begin{equation}
    \label{balser_transformation}
    T(x) = R(x)S(x)
\end{equation}
where $R$ is a root-free transformation having a finite number of
nonzero terms and $S$ is a ramified Shearing-transformation. The
transformed system
\begin{equation}
    \label{qsystem}
                x\frac{dy}{dx} = \hat{A}(x)y
\end{equation}
has a coefficient matrix of the form
\begin{equation}\label{qmatrix}
              \hat{A}(x) = x^{-r}\sum_{j=p}^\infty \hat{A}_{j}
              x^{j/q}\quad(\hat{A}_p \neq 0, \;p\ge 0)
\end{equation}
where $p$ is relatively prime to $q$ and $\hat{A}_p $ has several
eigenvalues. Applying the Splitting Lemma to (\ref{qsystem}) then
yields a $q $-meromorphic transformation taking the system into a
new system whose coefficient matrix is block-diagonal. Hence the
remaining computations are carried out on matrices
containing ramified power series.\\

One may ask under which conditions there exists also a root-free
transformation which achieves a block-diagonalisation of the
original system (\ref{system}) {\em without} introducing
ramifications, and how to compute such a transformation.\\

We shall give a sufficient condition for the existence of such a
root-free transformation and also provide a constructive method
for computing its coefficients. Denote by $\spec(A) $ the set of
eigenvalues of a complex square matrix $A $. We will prove the
following theorem:
\begin{theorem}[\lq\lq Root-Free Splitting Lemma"]
\label{main_theorem} Consider the system (\ref{system}) and assume
there exists a Shearing-transformation $S$ of ramification index
$q$ taking the system into one of the form (\ref{qsystem}) such
that its leading matrix $\hat{A}_p$ is similar to a block-diagonal
matrix
\[
\hat{B}_p = \left(%
\begin{array}{cc}
  \hat{B}^{11}_p & 0 \\
  0 & \hat{B}^{22}_p \\
\end{array}%
\right)
\]
and suppose that for all $\lambda_1 \in \spec(\hat{B}_p^{11}) $
and $\lambda_2 \in \spec(\hat{B}_p^{22}) $ it holds $\lambda_1
\neq e^{2\pi ik/q}\lambda_2 $ for all $k\in \N $. Then there
exists a root-free transformation $H $ with the following
properties:
\begin{enumerate}
    \item $H $ transforms the system (\ref{system}) into an
        equivalent system with block-diagonal coefficient matrix
\[
{B} = \left(%
\begin{array}{cc}
  {B}^{11} & 0 \\
  0 & {B}^{22} \\
\end{array}%
\right)
\]
    where the block sizes match those in the matrix $\hat{B}_p$.
    \item The matrix $S[B]$ has the leading matrix $\hat{B}_p$ and
    the same pole order as $S[A]$.
    \item A finite number of coefficients of the root-free transformation $H $
can be computed from a finite number of the coefficients of the
system (\ref{system}).
\end{enumerate}
\end{theorem}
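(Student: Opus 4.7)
The plan is to combine the classical Splitting Lemma, applied in the ramified setting, with a Galois-descent argument that removes the ramification. Let $\zeta = e^{2\pi i/q}$ and let $\sigma$ denote the $\C((x))$-automorphism of $\C((x^{1/q}))$ defined by $\sigma(x^{1/q}) = \zeta x^{1/q}$, so that a $q$-meromorphic object is root-free iff it is $\sigma$-invariant. First, I would choose a constant $P$ with $P^{-1}\hat{A}_p P = \hat{B}_p$; the hypothesis at $k=0$ gives $\spec(\hat{B}_p^{11}) \cap \spec(\hat{B}_p^{22}) = \emptyset$, so the Splitting Lemma \cite{Was67}, applied over the field $\C((x^{1/q}))$, produces a $q$-meromorphic transformation $U = I + O(x^{1/q})$ such that $\tilde{H} := SPU$ satisfies $\tilde{H}[A] = \hat{B}$, a $q$-meromorphic block-diagonal matrix whose leading coefficient is $\hat{B}_p$.

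Next I would invoke Galois symmetry. Since $A$ has only integer powers of $x$, $\sigma(A) = A$, so $\sigma(\tilde{H})[A] = \sigma(\hat{B})$ is also block-diagonal in the same partition. The $q$-meromorphic matrix $G := \tilde{H}^{-1}\sigma(\tilde{H})$ then satisfies $G[\hat{B}] = \sigma(\hat{B})$, and the central claim is that $G$ is itself block-diagonal. To prove this I would write $G$ in blocks and observe that its $(1,2)$ off-diagonal part satisfies, at each order in $x^{1/q}$, a Sylvester equation whose leading operator is $X \mapsto \hat{B}_p^{11} X - \zeta^p X \hat{B}_p^{22}$; the spectral hypothesis (applied at $k=p$ for the $(1,2)$ block, and symmetrically at $k = q-p$ for the $(2,1)$ block) ensures this operator is invertible, so an induction on the $x^{1/q}$-order forces $G^{12} \equiv G^{21} \equiv 0$.

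With $G$ block-diagonal, a standard Hilbert~90 / averaging argument applied within each diagonal block produces a block-diagonal $q$-meromorphic $K$ with $G = K^{-1}\sigma(K)$. Setting $H := \tilde{H}K$, a direct check gives $\sigma(H) = \tilde{H}G\sigma(K) = \tilde{H}K = H$, so $H$ is root-free; because $K$ is block-diagonal, $B := H[A] = K[\hat{B}]$ remains block-diagonal, proving part~1. Part~2 follows from a direct computation tracing leading terms through the identity $S[B] = (HS)[A]$ together with $\hat{B} = (SPU)[A]$, using that $K$ preserves the block structure and has the leading normalization inherited from the Splitting Lemma. Part~3 is the observation that every step is effective: the ramified Splitting Lemma, the cocycle trivialization, and the resulting recursions for the coefficients of $H$ all reduce to finite-dimensional linear algebra whose data depend on only finitely many coefficients of $A$.

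The main technical obstacle is the block-diagonality of $G$ in the second step: one must carefully identify which spectral disjointness appears at which order of the $x^{1/q}$-expansion, and verify that the full hypothesis ``$\lambda_1 \neq \zeta^k \lambda_2$ for all $k \in \N$'' is exactly what is needed to handle all the Sylvester equations that arise, including those implicit in the cocycle trivialization involving the higher iterates $\sigma^2, \ldots, \sigma^{q-1}$. A purely direct alternative would be to skip the Galois machinery and construct $H$ coefficient-by-coefficient from the start, solving each Sylvester recursion on the ramified side and then showing root-freeness a posteriori; this route is more elementary but the indexing is heavier and the precise role of the symmetry hypothesis is less transparent without the Galois viewpoint.
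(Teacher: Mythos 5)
Your Galois-descent framing is a genuinely different packaging of the same underlying symmetry that the paper exploits. The paper instead introduces \emph{$(\omega,P)$-commutative} systems, proves (Lemma \ref{prepare_system}) that the constant conjugation $C$ can be chosen block-diagonal so that $\tilde P=C^{-1}S(e^{2\pi i})C$ is block-diagonal, and then proves a refined Splitting Lemma (Lemma \ref{commutative_splitting}) showing by induction on the recursion (\ref{recursion}) that the normalized splitting transformation $\hat T$ \emph{itself} satisfies $\hat T(x)\tilde P=\tilde P\,\hat T(e^{2\pi i}x)$. In your language this says precisely that the $1$-cocycle $G=\tilde H^{-1}\sigma(\tilde H)$ is the \emph{constant} matrix $\tilde P$, so the coboundary $K$ can be written down explicitly and no appeal to Hilbert~90 is required; the final transformation is $H=SC\hat T C^{-1}S^{-1}$. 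Your ``direct alternative'' in the last paragraph is in fact essentially the paper's route.

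There are two concrete problems with your write-up. First, a small algebraic slip: with $G:=\tilde H^{-1}\sigma(\tilde H)$ you need $\sigma(K)=G^{-1}K$ (equivalently $G=K\sigma(K^{-1})$) for $\sigma(\tilde HK)=\tilde HG\sigma(K)=\tilde HK$ to hold; the relation $G=K^{-1}\sigma(K)$ you state gives $\sigma(\tilde HK)=\tilde HGKG$, which is not $\tilde HK$ unless $G^2=I$. Second, and more seriously, the generic Hilbert~90 trivialization determines $K$ only up to left multiplication by an arbitrary root-free block-diagonal matrix and gives no control on $K_0$, so your one-line justification of property 2 (``has the leading normalization inherited from the Splitting Lemma'') does not follow: $B=K[\hat B]$ and hence $S[B]$ could acquire a different leading matrix and pole order. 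To close this you must show $G$ is constant (which you can get from the uniqueness of the normalized splitting transformation $U$ together with block-diagonality of $\tilde P$, or directly by the paper's inductive Lemma \ref{commutative_splitting}) and then take the explicit block-diagonal $K=C^{-1}S^{-1}C$, at which point property 2 becomes a finite leading-term computation rather than an appeal to an un-normalized cohomological existence statement. Your proof of block-diagonality of $G$ via the Sylvester operator $X\mapsto\hat B_p^{11}X-\zeta^pX\hat B_p^{22}$ (and its $(2,1)$-partner) is sound and correctly identifies where the symmetry hypothesis enters, matching the use of $\wpspec$-disjointness in the paper.
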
 The classical Splitting Lemma can be seen as a
particular case of this theorem by putting $S$ as the identity
matrix and $q=1$.\\

This paper is organised as following: in Section
\ref{splitting_review}, we review the classical Splitting Lemma.
In the following section we introduce a special class of systems
and give a variant of the Splitting Lemma, particular to this
class. Using this, we will give the proof of Theorem
\ref{main_theorem} in Section \ref{root_free} and illustrate the
benefits of our theorem concerning the formal
reduction in Section \ref{application} on an example.\\

\noindent {\bf Notations}: Throughout the paper, empty entries in
matrices are supposed to be filled with 0. We write
$\diag(a_1,\ldots,a_n)$ for a (block-)diagonal matrix whose
diagonal entries are the $a_i$. The valuation of a polynomial or
formal power series (with possibly negative or fractional
exponents) is the smallest occurring power in the variable $x$.
Other definitions of notations are made as they appear in the
text.

\section{Review of the Splitting Lemma}
\label{splitting_review}
 As we have previously mentioned, the
Splitting Lemma is a well-known result. Its proof is carried out
in a constructive fashion and gives a method for computing the
coefficients $T_j$ of the transformation matrix $T $ as in
(\ref{Tx}), see for example \cite{Bal2000, Bar97a, Was67}. We
repeat it here for reason of completeness. Also, we will formulate
it for $q $-meromorphic systems in preparation of the proof of
Lemma \ref{commutative_splitting}.
\begin{lemma}
\label{splitting} Consider the system (\ref{qsystem}) and assume
that  $\hat{A}_p$ is block-diagonal
\[
\hat{A}_p = \left(%
\begin{array}{cc}
  \hat{A}_p^{11} & 0 \\
  0 & \hat{A}_p^{22} \\
\end{array}%
\right)
\]
such that
\[
\spec(\hat{A}_p^{11})\cap\spec(\hat{A}_p^{22})=\emptyset.
\]
Then there exists a formal $q $-meromorphic transformation of the
form
\[
\hat{T}(x) = \sum_{j = 0}^{\infty}\hat{T}_j x^{j/q}\quad
(\hat{T}_0 = I)
\]
such that the transformed system is block-diagonal with the same
block partition as in $\hat{A}_p$.
\end{lemma}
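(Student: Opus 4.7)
My plan is to determine the coefficients $\hat{T}_j$ and the corresponding coefficients of $B$ recursively by expanding the transformation equation $\hat{T}B = \hat{A}\hat{T} - x\hat{T}'$ in powers of $x^{1/q}$. I write $B = x^{-r}\sum_{j\ge p} B_j x^{j/q}$ and demand that each $B_j$ be block-diagonal with the same partition as $\hat{A}_p$. To pin down $\hat{T}$ uniquely I will normalize by imposing $\hat{T}_j^{11}=\hat{T}_j^{22}=0$ for $j\ge 1$, so that only the off-diagonal blocks of $\hat{T}_j$ and the diagonal blocks of $B_j$ remain as genuine unknowns.

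Collecting coefficients of $x^{m/q-r}$ and isolating the highest-index unknowns (using $\hat{T}_0=I$) produces, for $m\ge p$, an equation of the form
\begin{equation*}
\hat{A}_p \hat{T}_{m-p} - \hat{T}_{m-p} \hat{A}_p \;=\; B_m - \hat{A}_m + R_m,
\end{equation*}
where $R_m$ is a polynomial expression in data of strictly lower index, arising from the convolution of $\hat{A}$ with $\hat{T}$ and from the contribution of $x\hat{T}'$. At the leading order $m=p$ this reduces to $B_p = \hat{A}_p$, which is block-diagonal by hypothesis. For $m>p$ I split the equation block-wise: the $(1,1)$ and $(2,2)$ blocks have vanishing left-hand side by the normalization, so $B_m^{11}$ and $B_m^{22}$ can simply be read off from the known right-hand side. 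In the $(1,2)$ block I obtain a Sylvester equation $\hat{A}_p^{11}X - X\hat{A}_p^{22} = Y$ with $Y$ already known, and symmetrically in the $(2,1)$ block; the disjoint-spectra assumption $\spec(\hat{A}_p^{11}) \cap \spec(\hat{A}_p^{22}) = \emptyset$ guarantees that both Sylvester operators are invertible, so $\hat{T}_{m-p}^{12}$ and $\hat{T}_{m-p}^{21}$ are uniquely determined.

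The main obstacle is the bookkeeping of the convolution indices, specifically verifying that every quantity hidden inside $R_m$ depends only on $\hat{T}_k$ with $k<m-p$ and on $B_j$ with $j<m$, so that the recursion is genuinely well-posed at each step. Once this is confirmed, the block-triangular structure of the recursion together with the classical Sylvester theorem delivers the full formal series $\hat{T}=\sum_{j\ge 0}\hat{T}_j x^{j/q}$ with $\hat{T}_0=I$ and a block-diagonal $B$ of the same pole order $r$, as required by the statement.
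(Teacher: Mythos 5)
Your proof follows essentially the same route as the paper: normalize $\hat{T}$ to have identity diagonal blocks (the paper writes $\hat{T} = \left(\begin{smallmatrix} I & \hat{U} \\ \hat{V} & I \end{smallmatrix}\right)$ with $\hat{U}_0 = \hat{V}_0 = 0$, which is your normalization $\hat{T}_j^{11}=\hat{T}_j^{22}=0$ for $j\ge 1$), expand $\hat{T}\hat{B} = \hat{A}\hat{T} - x\hat{T}'$ in powers of $x^{1/q}$, read off the diagonal blocks of $\hat{B}$ directly, and solve the off-diagonal blocks by inverting the Sylvester operator, using the disjoint-spectra hypothesis. The one step you flag but defer — that every term inside $R_m$ involves only lower-index data — is immediate once the derivative contribution is written down explicitly: $x\hat{T}'$ contributes $(m/q - r)\,\hat{T}_{m-qr}$ to the coefficient of $x^{m/q - r}$, and since the system has positive pole order, $p/q < r$, i.e.\ $p < qr$, so $m - qr < m - p$ and this coefficient is already determined; the convolution terms $\hat{T}_{m-k}\hat{B}_k$ and $\hat{A}_k\hat{T}_{m-k}$ with $p < k < m$ are likewise of strictly lower index. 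With that bookkeeping carried out, your argument coincides with the paper's proof.
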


\begin{proof}
We use a transformation of the special form
\[
\hat{T}(x) =\left(%
\begin{array}{cc}
  I & \hat{U}(x) \\
  \hat{V}(x) & I\\
\end{array}%
\right)
\]
with $\hat{U_0} = \hat{V_0}= 0 $. Denote by $\hat{B}$ the matrix
$\hat{T}[\hat{A}]$. Inserting the series expansion for $\hat{A},
\hat{B} $ and $\hat{T} $  and comparing coefficients gives the
recursion formula
\begin{equation}\label{recursion}
    \hat{A}_p\hat{T}_h-\hat{T}_h \hat{A}_p =
    \sum_{j = 1}^{h} (\hat{T}_{h-j}\hat{B}_{j+p}-\hat{A}_{j+p} \hat{T}_{h-j})+((p+h)/q-r) \hat{T}_{p+h-qr},\quad h> 0
\end{equation}
where $\hat{T}_{j} = 0 $ for $j<0 $. Equation (\ref{recursion}) is
of the form
\begin{equation}\label{recursion1}
      \hat{A}_p\hat{T}_h-\hat{T}_h \hat{A}_p = \hat{B}_{h+p}-\hat{A}_{h+ p}+\hat{R}_h
\end{equation}
where
\[
    \hat{R}_h = \sum_{j = 1}^{h-1} (\hat{T}_{h-j}\hat{B}_{j+p}-\hat{A}_{j+p} \hat{T}_{h-j})
                +((p+h)/q-r) \hat{T}_{p+h-qr}
\]
depends only on $\hat{B}_j $ with $j < h+ p $ and $\hat{T}_{j}$
with $j<h$. Using the special form of
\[
        \hat{T}_h = \begin{mat}{cc} 0 & \hat{U}_h\\ \hat{V}_h & 0 \end{mat}, \quad
        \hat{B}_h = \begin{mat}{cc} \hat{B}^{11}_h & 0\\ 0 & \hat{B}^{22}_h \end{mat}
\]
and decomposing $\hat{R}_h$ into block-structure accordingly gives
the following system of equations:
\begin{eqnarray}
      \hat{B}^{11}_{p +h}+\hat{R}^{11}_h &=& 0,\label{eq1}\\
      \hat{B}^{22}_{p +h}+\hat{R}^{22}_h &=& 0\label{eq2}
\end{eqnarray}
where $\hat{B}^{11}_{p +h}$ and $\hat{B}^{22}_{p +h}$ are unknown,
and
\begin{eqnarray}
      \hat{A}_p^{11}\hat{U}_h - \hat{U}_h \hat{A}_p^{22} &=& \hat{R}^{12}_h,\label{eq3}\\
      \hat{A}_p^{11}\hat{V}_h - \hat{V}_h \hat{A}_p^{22} &=&
      \hat{R}^{21}_h\nonumber
\end{eqnarray}
with unknowns $\hat{U}_h $ and $\hat{V}_h $. Given $\hat{R}_h$,
the first two equations (\ref{eq1}) and (\ref{eq2}) can be solved
by setting $\hat{B}^{11}_{p +h}=-\hat{R}^{11}_h$ and
$\hat{B}^{22}_{p +h}=-\hat{R}^{22}_h$. The remaining equations can
be solved uniquely for $\hat{U}_h $ and $\hat{V}_h$ because the
matrices $\hat{A}_p^{11}$ and $\hat{A}_p^{22}$ have no eigenvalues
in common, see e.g. \cite{gan59}.
\end{proof}

\section{On \boldmath $(\omega, P)$-Commutative Systems \unboldmath}

In this section, we study a particular class of $q $-meromorphic
systems. Starting point of our considerations was \cite[Lemma 5,
Section 3.3]{Bal2000} observing that a system transformed by a
Shearing-transformation has a special structure. We will state
this more generally and give conditions under which this special
structure is preserved by the Splitting-Lemma.\\
\begin{definition}
\label{commutative} Let $q> 1 $ be a positive integer, $\omega=
e^{2\pi i/q}$ and $P \in \C^{n\times n}$. We call a formal $q
$-meromorphic matrix $\hat{A}$ as in (\ref{qmatrix}) {\em
$(\omega, P)$-commutative} if
\[
    \hat{A}_j P =\omega^{j} P \hat{A}_j\quad (j \ge p).
\]
A system of the form (\ref{qsystem}) is called $(\omega, P)$-commutative
 if its coefficient matrix is $(\omega, P)$-commutative.
\end{definition}

\begin{remark}
The considerations in \cite{Bal2000} correspond, in our notation,
to the case of a $(\omega, P) $-commutative system where $P $ is
an invertible diagonal matrix. However, this restriction is not
necessary in this section and we will develop our theory
first for arbitrary matrices $P $.\\
\end{remark}

\begin{remark}
Two complex matrices $A$ and $B $ satisfying $A B =\omega B A $
are called {\em $\omega $-commutative} in \cite{Hol03}. In terms
of their notation, the $j$th coefficient of a $(\omega,
P)$-commutative matrix and the
matrix $P $ are $\omega^{j}$-commutative.\\
\end{remark}

The following lemma gives an alternative characterisation for
$(\omega, P)$-commutative systems which will be useful later. Note
that similar concepts are used in \cite{Bal2000}.
\begin{lemma}
\label{alternative_criterion}
\renewcommand{\labelenumi}{\roman{enumi})}
Consider a system of the form (\ref{qsystem}). Then the following
two statements are equivalent:
\begin{enumerate}
    \item $\hat{A}$ is $(\omega, P)$-commutative.
    \item $\hat{A}(x)P= P\hat{A}(e^{2\pi i}x) $.
\end{enumerate}
\end{lemma}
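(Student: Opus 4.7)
The plan is to prove the equivalence by directly substituting the series expansion of $\hat{A}$ into the functional equation in (ii) and matching term by term with the commutativity relations in (i). The crucial observation is that, although $x$ is formal, the substitution $x \mapsto e^{2\pi i}x$ is well-defined on formal series in fractional powers $x^{1/q}$ via the rule $(e^{2\pi i}x)^{1/q} = \omega\, x^{1/q}$, where $\omega = e^{2\pi i/q}$. Because $r$ is an integer, the prefactor $x^{-r}$ is invariant under this substitution.

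Explicitly, I would start from
\[
\hat{A}(e^{2\pi i}x) = (e^{2\pi i}x)^{-r}\sum_{j=p}^{\infty}\hat{A}_j(e^{2\pi i}x)^{j/q} = x^{-r}\sum_{j=p}^{\infty}\omega^j \hat{A}_j x^{j/q},
\]
so that
\[
P\hat{A}(e^{2\pi i}x) = x^{-r}\sum_{j=p}^{\infty}\omega^j P\hat{A}_j x^{j/q},
\qquad
\hat{A}(x)P = x^{-r}\sum_{j=p}^{\infty}\hat{A}_j P\, x^{j/q}.
\]
From here the equivalence is immediate by comparing coefficients of $x^{j/q}$: the identity $\hat{A}(x)P = P\hat{A}(e^{2\pi i}x)$ holds if and only if $\hat{A}_j P = \omega^j P \hat{A}_j$ for every $j \ge p$, which is precisely Definition~\ref{commutative}.

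There is no real obstacle in this argument; it is a direct computation. The only point deserving care is the justification that $\hat{A}(e^{2\pi i}x)$ makes sense for a formal $q$-meromorphic series and that the substitution commutes with the summation, which follows from the fact that $x \mapsto \omega x$ acts as a well-defined $\C$-algebra automorphism on $\C[[x^{1/q}]][x^{-1/q}]$ fixing $x$.
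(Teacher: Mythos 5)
Your proof is correct and follows essentially the same direct computation as the paper: substitute the series expansion, use $(e^{2\pi i}x)^{j/q}=\omega^j x^{j/q}$ and the invariance of $x^{-r}$ under the substitution, and compare coefficients of $x^{j/q}$. Your remarks about well-definedness of the substitution on $\C[[x^{1/q}]][x^{-1/q}]$ are a small, welcome addition but the argument is the same.
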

\begin{proof}
A direct calculation shows:
\begin{eqnarray*}
    \hat{A}_j P & = & \omega^{j} P \hat{A}_j \quad \forall j\ge p \\
    \Longleftrightarrow x^{-r} \sum_{j = p}^\infty \hat{A}_j x^{j/q} P & =
    & x^{-r}\sum_{j = p}^\infty e^{2\pi ij/q}P \hat{A}_j x^{j/q}\\
    \Longleftrightarrow \hat{A}(x)P & = & P\hat{A}(e^{2\pi i} x).
\end{eqnarray*}
\end{proof}
It is also straightforward to see that we have
\begin{lemma}
\label{unramified} Consider a system of the form (\ref{qsystem})
and suppose $\hat{A}$ is $(\omega, I)$-commutative where $I$
denotes the $n\times n$ identity matrix. Then $\hat{A}$ is an
unramified formal meromorphic power series matrix.
\end{lemma}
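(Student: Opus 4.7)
The plan is to read the $(\omega,I)$-commutativity condition coefficient by coefficient and observe that it forces most coefficients to vanish, leaving only those whose exponent is integral.

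More concretely, I would first apply Definition \ref{commutative} with $P=I$. The equation $\hat{A}_j I = \omega^j I \hat{A}_j$ collapses to
\[
(1 - \omega^j)\hat{A}_j = 0 \quad (j \ge p).
\]
Since $\omega = e^{2\pi i/q}$ is a primitive $q$-th root of unity, the scalar $1 - \omega^j$ is nonzero precisely when $q \nmid j$. Hence $\hat{A}_j = 0$ whenever $j$ is not a multiple of $q$, and only the coefficients with index $j = qk$ can contribute to the expansion.

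Next I would substitute back into (\ref{qmatrix}): writing $j = qk$ one obtains
\[
\hat{A}(x) = x^{-r}\sum_{k \ge p/q} \hat{A}_{qk}\, x^{k},
\]
which is a formal meromorphic power series in $x$ itself (no fractional exponents appear). Therefore $\hat{A}$ is unramified, as claimed.

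There is really no obstacle here; the content is purely bookkeeping about which powers of $\omega$ equal $1$. The only thing worth flagging is that the case $p > 0$ is consistent with the hypothesis $\gcd(p,q) = 1$ of the introduction only when $q = 1$, in which case the statement is trivial; for $p = 0$ the conclusion says only that the series, which a priori lives in $\C[[x^{1/q}]][x^{-1}]$, actually lies in $\C[[x]][x^{-1}]$. The proof should simply present the two-line computation above and invoke Lemma \ref{alternative_criterion} only if one prefers the equivalent functional formulation $\hat{A}(x) = \hat{A}(e^{2\pi i}x)$, from which invariance under the $q$-th root substitution makes the unramifiedness equally transparent.
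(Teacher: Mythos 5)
Your two-line computation is correct and is exactly what the paper means by ``It is also straightforward to see''; the paper in fact gives no proof of Lemma~\ref{unramified}, so there is nothing to compare beyond observing that $\hat{A}_j = \omega^j \hat{A}_j$ kills every coefficient whose index is not a multiple of $q$, leaving only integral powers of $x$ after the change of summation index $j = qk$. The alternative route via Lemma~\ref{alternative_criterion}, rewriting the condition as $\hat{A}(x) = \hat{A}(e^{2\pi i}x)$ and concluding invariance under the deck transformation of the $q$-fold cover, is the same content in functional form; either presentation is fine.

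One clarification on your side remark: the coprimality $\gcd(p,q)=1$ is not part of the definition~(\ref{qmatrix}) and is not among the hypotheses of Lemma~\ref{unramified}. It appears in the introduction as a property of the particular system produced by Balser's transformation~(\ref{balser_transformation}), and it is invoked again in the proof of Theorem~\ref{main_theorem}, but Definition~\ref{commutative} and the lemmas of Section~3 refer only to the shape of the expansion~(\ref{qmatrix}) with $\hat{A}_p \neq 0$, $p \ge 0$. This matters, because the lemma is later applied in the proof of Proposition~\ref{commutative_criterion} to the transformed matrix $B = S[\hat{A}]$, whose leading index need not be coprime to $q$. Your observation that coprimality together with $(\omega,I)$-commutativity would force $q=1$ is correct reasoning, but the resolution is simply that coprimality is not assumed here, not that the lemma is secretly trivial. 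The sentence ``the case $p>0$ is consistent with $\gcd(p,q)=1$ only when $q=1$'' is also misleading as written: $p>0$ and $\gcd(p,q)=1$ with $q>1$ is of course possible in general; it is the additional $(\omega,I)$-commutativity hypothesis that forces $q\mid p$ and hence $q=1$. None of this affects the validity of your main argument, which stands.
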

We make the following definition: for two eigenvalues $\lambda_1 $
and $\lambda_2 $ of $\hat{A}_p $ we define an equivalence relation
$\sim_l$ by
\[
    \lambda_1 \sim_l \lambda_2 \quad \Longleftrightarrow \quad \exists
    k\in\{0,\ldots, q-1\}: \lambda_1=\omega^{l k} \lambda_2
\]
and denote by $\wlspec{l}(A_p)$ the set
\[
    \{[\lambda]_{\sim_l}|\lambda \in \spec(\hat{A}_p)\}
\]
where we will, slightly abusing notation, identify $\lambda$ with
$[\lambda]_{\sim_l}$.\\

Given $C\in\GLn{\C} $, it is clear that the matrix
$C^{-1}\hat{A}C$ is $(\omega, C^{-1}PC)$-commutative.
\begin{lemma}
\label{prepare_system}
Let $\hat{A}$ as in (\ref{qmatrix}) be
$(\omega, P)$-commutative and let $\lambda_1 $ and  $\lambda_2 $
be two eigenvalues of $\hat{A}_p$ with $\lambda_1
\not\sim_p\lambda_2$. Then there exists $C\in\GLn{\C} $ such that
$\hat{B} = C^{-1}\hat{A}C $ is $(\omega, \tilde{P})$-commutative
and
\[
\hat{B}_p = \left(%
\begin{array}{cc}
  \hat{B}_p^{11} & 0 \\
  0 & \hat{B}_p^{22} \\
\end{array}%
\right),\quad \tilde{P} =
\left(%
\begin{array}{cc}
  \tilde{P}^{11} & 0 \\
  0 & \tilde{P}^{22}\\
\end{array}%
\right)
\]
with $\tilde{P} = C^{-1}P C $, $\lambda_1 \in
\spec(\hat{B}_p^{11}) $, $ \lambda_2 \in \spec(\hat{B}_p^{22}) $
and
$\wpspec(\hat{B}_p^{11})\cap\wpspec(\hat{B}_p^{22})=\emptyset$.
\end{lemma}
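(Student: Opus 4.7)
The plan is to find the matrix $C$ as a change of basis that simultaneously block-decomposes both $\hat{A}_p$ and $P$ along carefully chosen invariant subspaces, and then to verify that the $(\omega, \tilde{P})$-commutativity transfers automatically under conjugation.

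The key observation I would establish first is that the commutation relation $\hat{A}_p P = \omega^p P \hat{A}_p$ forces $P$ to permute the generalized eigenspaces of $\hat{A}_p$ in a very specific way. Namely, for any eigenvalue $\mu$ of $\hat{A}_p$ and any integer $k \ge 1$, the identity
\[
(\hat{A}_p - \omega^p \mu I)^k P = \omega^{pk} P (\hat{A}_p - \mu I)^k
\]
follows by an easy induction on $k$. Consequently, if $E_\mu := \ker(\hat{A}_p - \mu I)^n$ is the generalized eigenspace of $\hat{A}_p$ associated with $\mu$, then $P(E_\mu) \subseteq E_{\omega^p \mu}$. In other words, $P$ acts on the set of generalized eigenspaces of $\hat{A}_p$ by following the orbits of multiplication by $\omega^p$, which is exactly the equivalence relation $\sim_p$.

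Next I would choose the partition of $\spec(\hat{A}_p)$ that respects this action. Let $\Sigma_1 := [\lambda_1]_{\sim_p}$ be the $\sim_p$-class of $\lambda_1$, and let $\Sigma_2 := \spec(\hat{A}_p) \setminus \Sigma_1$; since $\lambda_1 \not\sim_p \lambda_2$ we have $\lambda_2 \in \Sigma_2$. Both $\Sigma_1$ and $\Sigma_2$ are unions of $\sim_p$-classes, hence closed under multiplication by $\omega^p$. Defining $V_i := \bigoplus_{\mu \in \Sigma_i} E_\mu$ for $i = 1, 2$, we obtain $\C^n = V_1 \oplus V_2$, and both subspaces are invariant under $\hat{A}_p$ (trivially, since they are sums of its generalized eigenspaces) and under $P$ (by the observation above). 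Taking $C \in \GLn{\C}$ whose columns consist of a basis of $V_1$ followed by a basis of $V_2$, the matrices $C^{-1}\hat{A}_p C$ and $\tilde{P} = C^{-1} P C$ simultaneously acquire the desired block-diagonal form with matching block sizes, and by construction $\lambda_1 \in \spec(\hat{B}_p^{11})$, $\lambda_2 \in \spec(\hat{B}_p^{22})$, and $\wpspec(\hat{B}_p^{11}) \cap \wpspec(\hat{B}_p^{22}) = \emptyset$.

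Finally, the $(\omega, \tilde{P})$-commutativity of $\hat{B} = C^{-1}\hat{A}C$ is essentially automatic: for each $j \ge p$,
\[
\hat{B}_j \tilde{P} = C^{-1} \hat{A}_j C \cdot C^{-1} P C = C^{-1}(\hat{A}_j P) C = \omega^j C^{-1} P \hat{A}_j C = \omega^j \tilde{P} \hat{B}_j.
\]
The main obstacle is really the first step, namely identifying the correct invariant decomposition; once one sees that $P$ moves generalized eigenspaces of $\hat{A}_p$ around according to the $\sim_p$-orbits, the choice of $\Sigma_1$ as the class of $\lambda_1$ is forced, and all the other assertions of the lemma follow from bookkeeping.
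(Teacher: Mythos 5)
Your proof is correct, but it establishes the block-diagonality of $\tilde{P}$ by a different mechanism than the paper. The paper proceeds a posteriori: it first takes any $C$ that block-diagonalises $\hat{A}_p$ so that the two blocks have $\omega^p$-disjoint spectra, then writes $\tilde{P} = C^{-1}PC$ in block form and plugs it into $\hat{B}_p\tilde{P} = \omega^p\tilde{P}\hat{B}_p$; the off-diagonal blocks then satisfy Sylvester equations of the form $\hat{B}_p^{11}X - X\,\omega^p\hat{B}_p^{22} = 0$, which have only the trivial solution because $\hat{B}_p^{11}$ and $\omega^p\hat{B}_p^{22}$ share no eigenvalue. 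You instead work forward: you prove the finer statement that $P$ maps each generalised eigenspace $E_\mu$ into $E_{\omega^p\mu}$, deduce that the sums $V_1, V_2$ over $\sim_p$-classes are simultaneously $\hat{A}_p$- and $P$-invariant, and choose $C$ adapted to this decomposition. Both routes yield the same conclusion (and in fact essentially the same $C$). Your version has the advantage of revealing more structure --- namely how $P$ permutes the generalised eigenspaces along $\sim_p$-orbits, which is implicitly what powers the multiplicity-symmetry claim in Proposition \ref{commutative_criterion} --- whereas the paper's Sylvester-equation argument is shorter and reuses the exact same uniqueness lemma already invoked in the proof of the classical Splitting Lemma. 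Your final conjugation calculation confirming that $\hat{B}$ is $(\omega,\tilde{P})$-commutative is the same trivial step the paper leaves implicit.

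One small remark: your subspaces $V_i$ are well-defined and $P$-invariant even when, for some $\mu\in\Sigma_i$, the number $\omega^p\mu$ is not an eigenvalue of $\hat{A}_p$ (then $E_{\omega^p\mu}=\{0\}$ and $P$ simply annihilates $E_\mu$). This is consistent with the paper's Remark 3.1 allowing non-invertible $P$, so no hypothesis is missing.
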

\begin{proof}
The existence of a matrix $C $ so that $\hat{B}_p =
C^{-1}\hat{A}_pC $ satisfies the conditions of the lemma can be
seen easily from elementary properties of matrix decomposition. We
will use techniques similar as in \cite{Hol03} in order to show
that $\tilde{P}$ has the required block diagonal structure. Let
\[
 \tilde{P}=
\left(%
\begin{array}{cc}
  \tilde{P}^{11} & \tilde{P}^{12} \\
  \tilde{P}^{21} & \tilde{P}^{22}\\
\end{array}%
\right)
\]
where the block partition matches that in the matrix $\hat{B}_p$.
Inserting into the equation $\hat{B}_p\tilde{P}=\omega ^
p\tilde{P}\hat{B}_p$ yields in particular the two conditions
\[
    \hat{B}^{11}_p\tilde{P}^{12}-\omega
    ^p\tilde{P}^{12}\hat{B}^{22}_p= 0
\]
and
\[
    \hat{B}^{22}_p\tilde{P}^{21}-\omega
    ^p\tilde{P}^{21}\hat{B}^{11}_p= 0.
\]
The first of these two equations is of the form
\[
    \hat{B}^{11}_pX-X\omega^p\hat{B}^{22}_p= 0.
\]
The assumption
$\wpspec(\hat{B}_p^{11})\cap\wpspec(\hat{B}_p^{22})=\emptyset$
implies that the matrices $\hat{B}_p^{11}$ and
$\omega^p\hat{B}_p^{22}$ have no eigenvalue in common. The above
equation therefore has the unique solution $\tilde{P}^{12} = 0 $.
A very similar argument applies to the second equation. This
proves the lemma.
\end{proof}
\begin{remark}
The matrices $C$, $\tilde{B}$ and $\tilde{P}$ are in general
not uniquely determined.\\
\end{remark}

We now show that for a $(\omega, P) $-commutative system which has
block-diagonal structure as in Lemma \ref{prepare_system},
application of the Splitting Lemma preserves the property of being
$(\omega, P) $-commutative.
\begin{lemma}[{\bf \lq\lq Splitting Lemma for \boldmath $ (\omega, P)$-Commutative \unboldmath Systems"}]
\label{commutative_splitting} Consider the system (\ref{qsystem})
and assume that $\hat{A}$ is $(\omega, P)$-commutative with
$\hat{A}_p$ and $P$ block-diagonal with blocks of same dimension
\[
\hat{A}_p = \left(%
\begin{array}{cc}
  \hat{A}_p^{11} & 0 \\
  0 & \hat{A}_p^{22} \\
\end{array}%
\right),\quad P =
\left(
\begin{array}{cc}
  P^{11} & 0 \\
  0 & P^{22}\\
\end{array}%
\right)
\]
such that
\[
\wpspec(\hat{A}_p^{11})\cap\wpspec(\hat{A}_p^{22})=\emptyset.
\]
Then there exists a $(\omega, P)$-commutative $q $-meromorphic
transformation of the form
\begin{equation}
 \hat{T}(x) =
\sum_{j=0}^{\infty}T_jx^{j/q}\quad (T_0 = I)
\end{equation}
such that the transformed system is $(\omega, P)$-commutative and
block-diagonal with the same block partition as in $\hat{A}_p $
and $P $.
\end{lemma}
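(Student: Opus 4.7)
The plan is to take the transformation $\hat{T}$ constructed by Lemma~\ref{splitting} and verify that it automatically inherits $(\omega,P)$-commutativity from $\hat{A}$. A preliminary observation is that the hypothesis $\wpspec(\hat{A}_p^{11})\cap\wpspec(\hat{A}_p^{22})=\emptyset$ is equivalent to the family of statements $\spec(\hat{A}_p^{11})\cap\spec(\omega^{pk}\hat{A}_p^{22})=\emptyset$ for every $k\in\{0,\ldots,q-1\}$; the case $k=0$ is exactly the hypothesis of Lemma~\ref{splitting} and yields the transformation $\hat{T}$ together with the block-diagonal $\hat{B}$. The work is then to show (a) $\hat{T}_h P = \omega^{h}P\hat{T}_h$ for every $h\ge 0$, and (b) $\hat{B}_j P = \omega^{j}P\hat{B}_j$ for every $j\ge p$.

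Once (a) is known, (b) follows from Lemma~\ref{alternative_criterion}: from $\hat{T}(x)P=P\hat{T}(e^{2\pi i}x)$ one derives $\hat{T}(x)^{-1}P=P\hat{T}(e^{2\pi i}x)^{-1}$, and substituting this together with $\hat{A}(x)P=P\hat{A}(e^{2\pi i}x)$ into $\hat{B}(x)=\hat{T}(x)^{-1}\hat{A}(x)\hat{T}(x)-x\hat{T}(x)^{-1}\hat{T}'(x)$ produces $\hat{B}(x)P=P\hat{B}(e^{2\pi i}x)$. The core task is therefore (a), which I would prove by induction on $h$ carrying along the recursive form of (b), namely $\hat{B}_{h+p}P=\omega^{h+p}P\hat{B}_{h+p}$. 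The base case $h=0$ is immediate since $\hat{T}_0=I$.

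For the inductive step, assume the commutation relations hold for every $\hat{T}_l$ with $l<h$ and every $\hat{B}_{l+p}$ with $l<h$. A term-by-term computation on the expression defining $\hat{R}_h$ in the proof of Lemma~\ref{splitting} shows that $\hat{R}_h P = \omega^{h+p}P\hat{R}_h$; the scalar prefactor $(p+h)/q - r$ poses no issue and the contribution $\omega^{p+h-qr}$ coming from the shifted term $\hat{T}_{p+h-qr}$ collapses to $\omega^{p+h}$ because $\omega^{q}=1$. Reading off the diagonal blocks of this identity and combining with equations~(\ref{eq1})--(\ref{eq2}) yields $\hat{B}_{h+p}^{ii}P^{ii}=\omega^{h+p}P^{ii}\hat{B}_{h+p}^{ii}$ for $i=1,2$; since $\hat{B}_{h+p}$ and $P$ are block-diagonal, (b) at index $h+p$ follows.

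The main obstacle is the off-diagonal part, i.e.\ showing $\hat{U}_h P^{22}=\omega^{h}P^{11}\hat{U}_h$ (the argument for $\hat{V}_h$ is symmetric). I would manipulate equation~(\ref{eq3}) in two ways: multiplying on the right by $P^{22}$ and using $\hat{A}_p^{22}P^{22}=\omega^{p}P^{22}\hat{A}_p^{22}$ gives
\[
\hat{A}_p^{11}(\hat{U}_h P^{22}) - \omega^{p}(\hat{U}_h P^{22})\hat{A}_p^{22} = \hat{R}_h^{12}P^{22};
\]
multiplying instead on the left by $\omega^{h}P^{11}$ and using $P^{11}\hat{A}_p^{11}=\omega^{-p}\hat{A}_p^{11}P^{11}$ gives
\[
\hat{A}_p^{11}(\omega^{h}P^{11}\hat{U}_h) - \omega^{p}(\omega^{h}P^{11}\hat{U}_h)\hat{A}_p^{22} = \omega^{h+p}P^{11}\hat{R}_h^{12}.
\]
The two right-hand sides coincide by the $(1,2)$-block of the already-established identity $\hat{R}_h P=\omega^{h+p}P\hat{R}_h$, so the difference $X=\hat{U}_h P^{22}-\omega^{h}P^{11}\hat{U}_h$ satisfies the Sylvester equation $\hat{A}_p^{11}X=X(\omega^{p}\hat{A}_p^{22})$, whose unique solution is $X=0$ precisely because $\spec(\hat{A}_p^{11})\cap\spec(\omega^{p}\hat{A}_p^{22})=\emptyset$. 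This is the one step where the full strength of the $\wpspec$-disjointness hypothesis is genuinely used (plain spectrum disjointness, as in Lemma~\ref{splitting}, would not suffice), and it closes the induction.
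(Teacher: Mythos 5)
Your proof is correct and follows essentially the same route as the paper: invoke the classical Splitting Lemma for existence, then induct on $h$ carrying the three relations $\hat{R}_hP=\omega^{p+h}P\hat{R}_h$, $\hat{T}_hP=\omega^{h}P\hat{T}_h$, $\hat{B}_{p+h}P=\omega^{p+h}P\hat{B}_{p+h}$, with the off-diagonal blocks of $\hat{T}_h$ handled by the Sylvester equation $\hat{A}_p^{11}X-\omega^{p}X\hat{A}_p^{22}=0$ and the $\wpspec$-disjointness hypothesis. (A trivial bookkeeping remark: to reach your second displayed identity you must also multiply through by $\omega^{p}$ after left-multiplying by $\omega^{h}P^{11}$, i.e.\ effectively left-multiply by $\omega^{h+p}P^{11}$, but the equation as written is right.)
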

\begin{proof}
The existence of the transformation $\hat{T} $ is given by Lemma
\ref{splitting}, the classical Splitting Lemma. What remains to
show is that $\hat{T} $ and the transformed system are $(\omega,
P)$-commutative. Denote by $\hat{B} $ the coefficient matrix of
the transformed system. Using the notations as in
(\ref{recursion}) and (\ref{recursion1}), we will show that the
following relations hold:
\begin{eqnarray}
     \hat{R}_k P &=& \omega^{p+k} P \hat{R}_k, \label{R}\\
     \hat{T}_k P &=& \omega^{k} P \hat{T}_k, \label{T}\\
     \hat{B}_{p +k} P &=& \omega^{p +k} P \hat{B}_{p +k}\label{B}
\end{eqnarray}
for $k\in\N $. The case $k = 0 $ holds trivially by putting
$\hat{R}_0 = 0 $ since $\hat{T}_0 = I$ and $ \hat{B}_p=
\hat{A}_p$. Let $h$  be an arbitrary positive integer. We will see
that if the above relations hold for
$k = 0,\ldots, h-1 $ then they hold for $h $. The claim follows then by induction.\\

We compute
\begin{eqnarray*}
     \hat{R}_h P  &=& \sum_{j = 1}^{h-1} (\hat{T}_{h-j}\hat{B}_{j+p}-\hat{A}_{j+p}\hat{T}_{h-j})P
                +((p+h)/q-r) \hat{T}_{p+h-qr}P\\
                &=&  \sum_{j = 1}^{h-1}\omega^{p+h} P (\hat{T}_{h-j}\hat{B}_{j+p}-\hat{A}_{j+p}\hat{T}_{h-j})
                +((p+h)/q-r)\omega^{p+h} P \hat{T}_{p+h-qr}\\
                &=& \omega^{p+h} P \hat{R}_h
\end{eqnarray*}
where we have used (\ref{T}) and  (\ref{B}) for $k = 0,\ldots, h-1
$ and the assumption that $\hat{A}$ is $(\omega, P)$-commutative.
This proves (\ref{R}) for $k =h$.\\

We decompose $\hat{R}_h $ into blocks accordingly to the block
structure of $\hat{B}_h $ and $P $ and find using (\ref{eq1})
\[
     \hat{B}_{p +k}^{11}P^{11}=-\hat{R}^{11}_k P^{11}=-\omega^{p +k} P^{11}
     \hat{R}^{11}_k=\omega^{p +k}\hat{B}_{p +k}^{11}P^{11}.
\]
We can show an analogous relationship for $\hat{B}_{p +k}^{22}$
and $P^{22}$ using (\ref{eq2}). Hence we can see that (\ref{B})
holds for $k
=h$.\\

It remains to show (\ref{T}), which is equivalent to showing
\begin{eqnarray}
     \hat{U}_h P^{22} &=& \omega^{h} P^{11} \hat{U}_h, \label{U}\\
     \hat{V}_h P^{11} &=& \omega^{h} P^{22} \hat{V}_h. \label{V}
\end{eqnarray}
We will only show that the first of these two equations holds, the
second can be dealt with similarly. Multiplying (\ref{eq3}) with
$\omega^{p} P^{11} $ on the left and with $P^{22}$ on the right
and combining the two equations yields
\[
          \hat{A}_p^{11}(\hat{U}_h P^{22} - \omega^h P^{11}\hat{U}_h) -
          (\hat{U}_h P^{22} - \omega^h P^{11}\hat{U}_h)\omega^{p}\hat{A}_p^{22} = 0.
\]
This equation is of the form
\[
          \hat{A}_p^{11}X - X \omega^{p}\hat{A}_p^{22} = 0.
\]
The assumption
$\wpspec(\hat{A}_p^{11})\cap\wpspec(\hat{A}_p^{22})=\emptyset$
implies that the matrices $\hat{A}_p^{11}$ and
$\omega^{p}\hat{A}_p^{22}$ have no eigenvalue in common. The above
equation therefore has the unique solution $X = 0 $, from  which
we conclude (\ref{U}). This completes the proof of the lemma.
\end{proof}
\begin{remark}
\label{block_commutative_remark}
We observe that the two block
matrices in the transformed system are
$(\omega,P^{11})$-commutative and $(\omega,P^{22})$-commutative
respectively.
\end{remark}

\section{A Root-Free Splitting Lemma}
\label{root_free} We define a {\em generalised
Shearing-transformation} as a transformation of the form $SC$
where $S$ is a Shearing-transformation and $C\in\GLn{\C}$.
\begin{proposition}
\label{commutative_criterion} Consider a system as in
(\ref{qsystem}) with leading matrix $\hat{A}_p$ and let $q\geq 2$.
The following statements are equivalent:
\renewcommand{\labelenumi}{\roman{enumi})}
\begin{enumerate}
    \item There exists a system as in (\ref{system}) and a generalised
    Shearing-transformation $\tilde{S}$ of ramifications index $q$
    such that $\tilde{S}[A]=\hat{A}$.
    \item The system (\ref{qsystem}) is $(\omega, \tilde{P})$-commutative, the matrix $\tilde{P}$ is similar to a diagonal matrix and
    $\spec(\tilde{P})\subseteq\{1,\omega,\omega^2,\ldots,\omega^{q-1}\} $.
    Furthermore, if $\lambda$ is an eigenvalue of $\hat{A}_{p}$ with
    multiplicity $s$, the numbers
    $\omega\lambda,\ldots,\omega^{(q-1)}\lambda$ are all eigenvalues of the same
    multiplicity $s$.
\end{enumerate}
\end{proposition}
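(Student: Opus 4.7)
The plan is to base both directions on the functional identity
\[
\tilde S(e^{2\pi i}x)=\tilde S(x)\tilde P,
\]
where $\tilde S(x)=S(x)C$ with $S(x)=\diag(x^{p_1/q},\ldots,x^{p_n/q})$, and $\tilde P=C^{-1}DC$ with $D=\diag(\omega^{p_1},\ldots,\omega^{p_n})$. Combined with Lemma~\ref{alternative_criterion} (which rewrites $(\omega,\tilde P)$-commutativity as $\hat A(x)\tilde P=\tilde P\hat A(e^{2\pi i}x)$), this identity is the bridge between the two statements. Differentiating it via the chain rule yields the companion identity $\tilde S'(e^{2\pi i}x)=e^{-2\pi i}\tilde S'(x)\tilde P$.

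For (i)$\Rightarrow$(ii), I would substitute $x\mapsto e^{2\pi i}x$ in $\hat A=\tilde S^{-1}A\tilde S-x\tilde S^{-1}\tilde S'$ and use the two identities together with $A(e^{2\pi i}x)=A(x)$ (since $A$ is unramified) to obtain $\hat A(e^{2\pi i}x)=\tilde P^{-1}\hat A(x)\tilde P$, which by Lemma~\ref{alternative_criterion} is exactly $(\omega,\tilde P)$-commutativity. The matrix $\tilde P$ is similar to $D$, hence diagonalisable with spectrum in $\{1,\omega,\ldots,\omega^{q-1}\}$. For the multiplicity condition, the $j=p$ case of the commutativity relation reads $\hat A_p\tilde P=\omega^p\tilde P\hat A_p$; since $\tilde P$ is invertible, $\hat A_p$ and $\omega^p\hat A_p$ are similar and share their spectrum with multiplicities, and iterating together with $\gcd(p,q)=1$ generates the full orbit $\{\omega^j\lambda\}_{j=0}^{q-1}$.

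For (ii)$\Rightarrow$(i), I would diagonalise $\tilde P=C^{-1}DC$ with $D=\diag(\omega^{p_1},\ldots,\omega^{p_n})$ and $p_j\in\{0,\ldots,q-1\}$, set $\tilde S(x):=S(x)C$ as above, and define $A:=\tilde S\hat A\tilde S^{-1}+x\tilde S'\tilde S^{-1}$, so that $\tilde S[A]=\hat A$ holds by construction. Reusing the two functional identities from the forward direction together with the characterisation $\hat A(e^{2\pi i}x)=\tilde P^{-1}\hat A(x)\tilde P$ of $(\omega,\tilde P)$-commutativity, a direct computation shows $A(e^{2\pi i}x)=A(x)$. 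Thus $A$ contains only integer powers of $x$ and is an unramified formal meromorphic power series matrix, as required.

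The main technical obstacle is verifying that the $\tilde S$ built in the reverse direction has ramification index exactly $q$. Its ramification index equals $q/d$ where $d=\gcd(p_1,\ldots,p_n,q)$, so one must rule out $d>1$. If $d>1$ then $\tilde P^{q/d}=I$, and iterating the commutativity relation $q/d$ times forces $\hat A_j=0$ for every $j$ not divisible by $d$; combined with $\hat A_p\neq 0$ this gives $d\mid p$, but $d\mid q$ and $\gcd(p,q)=1$ then force $d=1$, a contradiction. Hence the ramification index equals $q$ and the construction is complete.
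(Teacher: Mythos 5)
Your proposal is correct and follows essentially the same route as the paper's proof: both directions rest on the functional identity $\tilde S(e^{2\pi i}x)=\tilde S(x)\tilde P$ combined with Lemma~\ref{alternative_criterion}, and the reverse direction diagonalises $\tilde P$, builds the Shearing-transformation from the exponents of its spectrum, and invokes Lemma~\ref{unramified}. You go somewhat further than the paper on two points: you spell out the multiplicity argument (which the paper merely cites to \cite{Bal2000} and \cite{Hol03}) via the similarity $\tilde P^{-1}\hat A_p\tilde P=\omega^p\hat A_p$ and $\gcd(p,q)=1$, and you explicitly verify that the constructed $\tilde S$ has ramification index exactly $q$ by showing $d=\gcd(p_1,\ldots,p_n,q)=1$ — a point the paper's proof leaves implicit. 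Both additions are correct and welcome, but they sharpen rather than change the argument.
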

\begin{proof}
We proof $i)\Rightarrow ii) $: let $\tilde{S}=SC$ be the
generalised Shearing-transformation. Since $\tilde{S}(e^{2\pi
i}x)=\tilde{S}(x)\tilde{P}$ where $\tilde{P}=C^{-1}PC$ with
$P=S(e^{2\pi i})$, we find with $\hat{A}=\tilde{S}[A]$
\begin{eqnarray*}
    \tilde{P}\hat{A}(e^{2\pi i}x) & = & \tilde{P}\tilde{S}^{-1}(e^{2\pi i}x)
A(e^{2\pi i}x)\tilde{S}(e^{2\pi i}x)-
        \tilde{P}x\tilde{S}^{-1}(e^{2\pi i}x)\tilde{S}'(e^{2\pi i}x)\\
        & = &\hat{A}(x)\tilde{P}\\
\end{eqnarray*}
showing that $\hat{A}$ is $(\omega, \tilde{P})$-commutative where
$\tilde{P}$ satisfies the stated properties. The claimed symmetry
in the spectrum of $\hat{A}_p$ can be shown as in the proofs of
\cite[Lemma 5, Section 3.3]{Bal2000} and \cite[Theorem 5]{Hol03}
since we have $\tilde{P}^{-1}\hat{A}_p\tilde{P}=\omega^p\hat{A}_p$
and $\omega^p$ is a primitive $q$th root of unity.\\

In order to prove the converse direction,
we first assume that $\tilde{P} =
\diag(\omega^{\alpha_1},\ldots,\omega^{\alpha_n})$ with
$\alpha_j\in\{0,\ldots, q-1\} $ and define the
Shearing-transformation
\[
            S(x) = \diag(  x^{-\alpha_1/q}, x^{-\alpha_2/q},
            \ldots, x^{-\alpha_n/q}).
\]
We observe that $S(e^{2\pi i}x)=\tilde{P}^{-1}S(x) $. Transform
the given system using this transformation $S$ and denote the
coefficient matrix of the transformed system by $B $. We compute
\begin{eqnarray*}
        B(e^{2\pi i}x)&=&S^{-1}(e^{2\pi i}x) \hat{A}(e^{2\pi i}x)S(e^{2\pi i}x)-
        xS^{-1}(e^{2\pi i}x)S'(e^{2\pi i}x)\\
                      &=&S^{-1}(x)\tilde{P}\hat{A}(e^{2\pi i}x)\tilde{P}^{-1}S(x)-xS^{-1}(x)S'(x)\\
                      & = & B(x)
\end{eqnarray*}
showing that $B$ is $(\omega, I)$-commutative and hence (Lemma
\ref{unramified}) $B $ must be a unramified formal meromorphic
power series matrix. The case of the general matrix $\tilde{P} $
follows by first applying a constant similarity transformation
which diagonalises $P $.
\end{proof}
We can now give the proof of our main theorem.\\

\begin{proof}{\bf of Theorem \ref{main_theorem}}
Let $C\in\GLn{\C}$ such that $\hat{B}=C^{-1}\hat{A}C$ has a
leading matrix $\hat{B}_p$ as in the assumptions of the Theorem.
In a similar way as in the proof of Lemma \ref{prepare_system}, we
can see that $C$ is block-diagonal, matching the block structure
of $\hat{B}_p$. Using this and Proposition
\ref{commutative_criterion}, we obtain that $\hat{B}$ is
$(\omega,\tilde{P})$-commutative where $\tilde{P}=C^{-1}S(e^{2\pi
i}x)C$ is similarly block-diagonal. Note that $p$ and $q$ being
relatively prime, the two conditions
$\wpspec(\hat{B}_p^{11})\cap\wpspec(\hat{B}_p^{22})=\emptyset$ and
$\wspec(\hat{B}_p^{11})\cap\wspec(\hat{B}_p^{22})=\emptyset$ are
equivalent. We can therefore apply Lemma
\ref{commutative_splitting} to $\hat{B}$ in order to obtain a
$(\omega,\tilde{P})$-commutative transformation matrix $\hat{T}$
such that $\hat{T}[\hat{B}]$ is $(\omega,\tilde{P})$-commutative
and block-diagonal with matching block structure.

We claim that the transformation matrix
\[
        H(x) = S(x)C\hat{T}(x)C^{-1}S^{-1}(x)
\]
is root-free satisfying the desired properties of the theorem. It
is clear that $H[A]$ is block-diagonal. But one verifies that $H$
is $(\omega,I)$-commutative and hence is root-free. The remaining
properties follow immediately.
\end{proof}

\section{Application for the Formal Reduction}
\label{application} Consider the situation where the system
(\ref{system}) is $q $-meromorphically equivalent to a system as
in (\ref{qsystem}) whose leading matrix $\hat{A}_p $ has several
eigenvalues but is not invertible. The exponential matrix
polynomial $W $ in a formal fundamental matrix solution
(\ref{FFMS}) is then
\[
    W(x) = \diag(w_1(x), w_2(x), \ldots, w_n(x))
\]
where the leading terms of diagonal entries of the form
\[
     w_k(x) = \lambda_k
     x^{-r+\frac{p}{q}}+\cdots\quad(\lambda_k\neq 0)
\]
are given by nonzero eigenvalues $\lambda_k$ of $\hat{A}_p $. The
diagonal entries having valuation greater than $-r+\frac{p}{q}$
correspond to eigenvalues zero. In particular, these entries might
involve ramifications different to $q$ or no ramifications at all.
Algorithms using the classical Splitting Lemma will not be able to
compute these entries without first introducing the ramification $q$.\\

In order to see how we can remedy this situation, we use the fact
that $\hat{A}_p $ is similar to a matrix of the form
\[
                \left(\begin{array}{cc}
                  \hat{B}_p^{11} & 0 \\
                  0 & \hat{B}_p^{22}\\
                \end{array}\right)
\]
where $\hat{B}_p^{11}$ is invertible and $\hat{B}_p^{22}$ is
nilpotent. Hence the conditions for Theorem \ref{main_theorem} are
satisfied and we will obtain a root-free transformation $H$ which
splits the system into
\begin{equation}
    \label{root_free_system}
                x\frac{dy}{dx} =
                \left(\begin{array}{cc}
                  B^{11} & 0 \\
                  0 & B^{22}\\
                \end{array}\right)y.
\end{equation}
This makes it possible to work independently on the two matrices
${B}^{11}$ and ${B}^{22}$: for the first matrix we can use a
Shearing-transformation introducing the (necessary) ramification
$q $. For the second matrix however we now recursively apply the
formal reduction process.\\

In order to illustrate this approach, consider the following
example with $n = 5 $, $r = 2 $ 
and
\[
    x^{-1}A(x) = \left( \begin {array}{ccccc} 0&{x}^{-3}&-{x}^{-1}&1&2\,{x}^{-1}\\\noalign{\medskip}-{x}^{-2}&{x}^{-1}&0&-{x}^{-1}&0
\\\noalign{\medskip}{x}^{-1}&1&0&{x}^{-3}&1\\\noalign{\medskip}1&-{x}^
{-1}&1&{x}^{-1}&{x}^{-3}\\\noalign{\medskip}{x}^{-1}&0&-3\,{x}^{-1}&0&
-1\end {array} \right).
\]
The Shearing-transformation $S(x) =\diag(S^{11}(x), S^{22}(x))$
with
\[
    S^{11}(x)=
                \left(\begin{array}{cc}
                  1 &  \\
                   & \sqrt{x}\\
                \end{array}\right),\quad
    S^{22}(x)) =
                \left(\begin{array}{ccc}
                  1 & & \\
                  & \sqrt{x}&\\
                    & &x\\
                \end{array}\right)
\]
transforms the system into a system of the form (\ref{qsystem})
with ramification index $q = 2 $, $p=1$ and block-diagonal leading
matrix with the two blocks
\[
    \hat{B}_1^{11}=
                \left(\begin{array}{cc}
                  0 & 1 \\
                 -1  & 0\\
                \end{array}\right),\quad
    \hat{B}_1^{22}=
                \left(\begin{array}{ccc}
                  0 &1 &0 \\
                  0& 0&1\\
                  0 & 0&0\\
                \end{array}\right)
\]
and the condition of Theorem \ref{main_theorem} is satisfied since
the first block matrix is invertible and the second is nilpotent.
We obtain a root-free transformation which is of the form
\[
    H(x) =
                    \left(\begin{array}{cc}
                  I &  U(x)\\
                  V(x) & I\\
                \end{array}\right),\quad
\]
with (we only have computed the first couple of terms)
\[
   U(x)=   \left( \begin {array}{ccc}
-6\,{x}^{2}+49\,{x}^{3}&-2\,x+18\,{x}^{2}&6\,x-48\,{x}^{2}\\
{x}^{2}-18\,{x}^{3}&-6\,{x}^{2}+46\,{x}^{3}&-2\,x+16\,{x}^{2}\\
                \end{array}\right)
\]
and
\[
    V(x)=   \left( \begin {array}{cc}
3\,x-29\,{x}^{2}+290\,{x}^{3}&1-9\,x+89\,{x}^{2}\\
-x+8\,{x}^{2}-88\,{x}^{3}&3\,x-28\,{x}^{2}+274\,{x}^{3}\\
-3\,{x}^{2}+28\,{x}^{3}-280\,{x}^{4}&-x+9\,{x}^{2}-88\,{x}^{3}
                \end{array}
                \right).
\]
Applying this transformation to the original system yields the
following block-diagonal matrix:
\[
    B(x) = \left( \begin {array}{ccccc} 0&{x}^{-3}-{x}^{-1}&&&\\\noalign{\medskip}
    -{x}^{-2}&{x}^{-1}&&&\\\noalign{\medskip}&&0&
{x}^{-3}&0\\\noalign{\medskip}&&0&{x}^{-1}&{x}^{-3}
\\\noalign{\medskip}&&-3\,{x}^{-1}&0&0\end {array} \right)+O(1).
\]
Applying the Shearing-transformation $S^{11}$ to the first block
matrix will result in a ramified system of smaller size and
invertible leading matrix equalling the first block of
$\hat{B}_1$. The formal reduction can now be applied to the second
block. In this example, it is found that another
Shearing-transformation of ramification index $q = 3$ results in a
system with invertible leading matrix. This decomposition can be
interpreted  as separation of the different slopes of the
Newton-polygon, see Theorem
\ref{newton_form} below.\\

We conclude that if the algorithm employed for computing the
transformation (\ref{transformation}) keeps the introduced
ramification minimal (as for example the algorithm in
\cite{Bal2000}), using the root-free Splitting Lemma allows the
recursive computation of $W$ using minimal ramifications. This
approach leads to a root-free transformation taking a system of
the form (\ref{system}) into a system from which all the leading
terms of the matrix $W $ (or, alternatively
the Newton polygon) can be determined directly. We state this as\\

\begin{theorem}
\label{newton_form} The given system (\ref{system}) is
meromorphically equivalent to a system $ x\frac{dy}{dx} = B(x)y$
where the matrix $B$ is block-diagonal
\[
    B(x)= \diag(B^{(1)}(x),B^{(2)}(x),\ldots,B^{(\nu)}(x))
\]
with $\nu \le n$ and there exist a diagonal transformation with
blocks of same sizes
\[
    S(x)= \diag(S^{(1)}(x),S^{(2)}(x),\ldots,S^{(\nu)}(x))
\]
with $S^{(k)}$ Shearing-transformations of ramification index
$q_k$ ($k=1,\ldots,\nu $) such that each of the matrices
$\hat{B}^{(k)} = S^{(k)}[B^{(k)}], k=1,\ldots,\nu $ has either no
pole at $x = 0 $ or an invertible leading matrix. In the latter
case, let $r_k=r-\frac{p_k}{q_k}$ $(\gcd(p_k,q_k)=1)$ be the pole
order and $\hat{B}_{p_k}^{(k)}$ the leading matrix of
$\hat{B}^{(k)}$. Then $|\wspec(\hat{B}_{p_k}^{(k)})|=1$ and if
$\lambda_k$ is an eigenvalue of $\hat{B}_{p_k}^{(k)}$ with
multiplicity $s_k$, the eigenvalues
$\omega\lambda_k,\ldots,\omega^{q_k-1}\lambda_k$ are all of the
same multiplicity $s_k$. There are $q_k s_k$ diagonal entries in
the matrix $W$ of the form
\[
     w_{k,j}(x) = \omega^{j}\lambda_k x^{-r_k}+\cdots\quad(j=
0,\ldots,q_k-1)
\]
where the dots denote terms with higher powers of $x $. The Newton
polygon of the system corresponding to this block admits a single
slope $r_k$ of length $q_ks_k$.
\end{theorem}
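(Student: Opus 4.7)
The plan is to prove Theorem \ref{newton_form} by strong induction on the dimension $n$. The base case $n=1$ is immediate: a scalar system has either no pole or an invertible leading coefficient, so take $\nu=1$, $q_1=1$, $B^{(1)}=A$, $S^{(1)}=1$. For $n\ge 2$, after the preliminary exponential shift described in the Introduction we may assume $A_0$ is nilpotent or a scalar multiple of the identity; if $A_0$ has no pole or is invertible, the same trivial decomposition applies. Otherwise we invoke the construction of \cite{Bal2000} (or equivalently those of \cite{Tur55, Was67}) to obtain a Shearing-transformation $S$ of ramification index $q$ taking the system into (\ref{qsystem}) with a non-nilpotent leading matrix $\hat{A}_p$ and $\gcd(p,q)=1$.

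The decisive dichotomy is whether $|\wspec(\hat{A}_p)|=1$ or $|\wspec(\hat{A}_p)|\ge 2$. In the first case, since $\gcd(p,q)=1$ the equivalences $\sim$ and $\sim_p$ agree, and Proposition \ref{commutative_criterion} forces $\spec(\hat{A}_p)=\{\omega^j\lambda: j=0,\ldots,q-1\}$ with a common multiplicity $s$; in particular $\hat{A}_p$ is invertible and $n=qs$. Setting $\nu=1$, $B^{(1)}=A$, $S^{(1)}=S$, $q_1=q$, $s_1=s$ then satisfies all claims of the theorem, the Newton-polygon assertion being a direct translation of the spectral structure. In the second case, choose $\lambda_1\not\sim_p\lambda_2$ in $\spec(\hat{A}_p)$. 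The hypotheses of Theorem \ref{main_theorem} are satisfied, and that theorem yields a root-free transformation $H$ mapping $A$ onto a block-diagonal matrix $B^{11}\oplus B^{22}$ whose blocks have dimensions strictly less than $n$. The induction hypothesis applied separately to each block produces Shearings and block decompositions which assemble into diagonal Shearings and a block decomposition for the whole system, as required.

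Termination is guaranteed because each invocation of Theorem \ref{main_theorem} strictly decreases block dimension, so the recursion halts in at most $n$ steps. For every terminal block $B^{(k)}$ whose Shearing $S^{(k)}$ produces an invertible leading matrix $\hat{B}_{p_k}^{(k)}$, Proposition \ref{commutative_criterion} applied inside that block furnishes the symmetry of the spectrum, the common multiplicity $s_k$, and hence the diagonal entries $w_{k,j}(x)=\omega^j\lambda_k x^{-r_k}+\cdots$ of $W$; the corresponding part of the Newton polygon carries a single slope $r_k$ of length $q_k s_k$. The main technical point is verifying at every recursion level that the chosen Shearing places the sub-system into a state where Theorem \ref{main_theorem} applies; this is automatic because the theorem only requires the existence of some Shearing realising the spectrum splitting, which is exactly what the constructions from \cite{Bal2000, Tur55, Was67}, combined with a change of basis supplied by Lemma \ref{prepare_system}, always produce.
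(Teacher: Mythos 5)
Your proof is essentially correct and gives a much more explicit version of the paper's extremely terse proof (three lines referring back to the ``procedure outlined in this section''). Both arguments are recursive applications of the root-free Splitting Lemma combined with Proposition~\ref{commutative_criterion} for the $\omega$-symmetry of the spectrum. Your reorganization as strong induction on $n$ with the dichotomy $|\wspec(\hat{A}_p)|=1$ versus $|\wspec(\hat{A}_p)|\ge 2$ is a clean termination criterion and in fact subsumes the paper's split into ``invertible part'' and ``nilpotent part'': whenever $|\wspec|\ge 2$, any $\sim_p$-inequivalent pair of eigenvalues (including a nonzero one and zero, when $\hat{A}_p$ is singular) triggers Theorem~\ref{main_theorem}, and when $|\wspec|=1$ with $\hat{A}_p$ non-nilpotent, Proposition~\ref{commutative_criterion} forces $\hat{A}_p$ to be invertible with $\spec(\hat{A}_p)=\{\omega^j\lambda\}_{j=0}^{q-1}$ of constant multiplicity, yielding the Newton-polygon and $W$-matrix claims.

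One point deserves correction: the sentence ``after the preliminary exponential shift \dots\ we may assume $A_0$ is nilpotent or a scalar multiple of the identity'' is both inaccurate (the shift always produces a nilpotent $A_0$, never a scalar one) and, more importantly, potentially dangerous inside a recursion, because an exponential shift is not a root-free transformation and would therefore destroy the meromorphic equivalence claimed by the theorem. The fix is simple and makes the proof cleaner: no exponential shift is needed at any level. If at some stage a block's leading matrix has a unique eigenvalue $\lambda\neq 0$, that block is already terminal with $S^{(k)}=I$, $q_k=1$, $p_k=0$ and $|\wspec|=1$; if it has several eigenvalues, split with the classical ($q=1$) Splitting Lemma; only if it is nilpotent do you invoke the Balser/Turrittin construction, and in that situation the shift is the identity. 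Recording this removes the one genuinely questionable step, and also clarifies that Balser's transformation $T=RS$ must be decomposed into a root-free factor $R$ (absorbed into the meromorphic equivalence) followed by the pure Shearing $S$ that Theorem~\ref{main_theorem} actually requires. With that adjustment your argument is a faithful and more detailed rendering of what the paper intends.
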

\begin{proof}
The proof follows from the procedure we have outlined in this
section, Proposition \ref{commutative_criterion}, and additional
application of the root-free Splitting Lemma to $B^{22}$ if
$B_p^{22}$ is similar to a block-diagonal matrix such that, for
each block, all eigenvalues are congruent modulo $\sim_p$.
\end{proof}
The issue of how to implement the root-free Splitting Lemma in a
computer algebra system deserves additional attention. In
particular, the question arises whether there exists a more direct
way of obtaining the root-free transformation matrix. Furthermore,
it seems likely that a combination of the results of this paper
together with our method in \cite{Pfl00}, where we have given a
different generalisation of the Splitting Lemma, will lead to a
significant improvement of the algorithmic formal reduction
of systems of linear differential equations.\\

\bibliography{ep}

\end{document}